\theoremstyle{definition}
\newtheorem{definition}{Definition}
\theoremstyle{plain}
\newtheorem{lemma}{Lemma}
\newtheorem{theorem}{Theorem}
\newtheorem{proposition}{Proposition}
\begin{document}
\title{The Lorenz system as a gradient-like system}
\author{Jeremy P. Parker}
\email{jparker002@dundee.ac.uk}
\address{
Division of Mathematics, University of Dundee, Dundee DD1 4HN, United Kingdom
}

\date{\today}
\begin{abstract}
    We formulate, for continuous-time dynamical systems, a sufficient condition to be a gradient-like system, i.e. that all bounded trajectories approach stationary points and therefore that periodic orbits,  chaotic attractors, etc. do not exist. This condition is based upon the existence of an auxiliary function defined over the state space of the system, in a way analogous to a Lyapunov function for the stability of an equilibrium. For polynomial systems, Lyapunov functions can be found computationally by using sum-of-squares optimisation. We demonstrate this method by finding such an auxiliary function for the Lorenz system. We are able to show that the system is gradient-like for $0\leq\rho\leq12$ when $\sigma=10$ and $\beta=8/3$, significantly extending previous results. The results are rigorously validated by a novel procedure: First, an approximate numerical solution is found using finite-precision floating-point sum-of-squares optimisation. We then prove that there exists an exact solution close to this using interval arithmetic.
\end{abstract}

\maketitle

\section{Introduction}

Nonlinear dynamical systems exhibit incredibly rich behaviour. 
A simple set of precisely deterministic differential equations can give rise to completely unpredictable chaos, while in other cases similar equations give no interesting dynamics at all, with all trajectories converging on the same equilibrium. One approach to understanding such systems is to find topologically simple trajectories, such as stationary points and periodic orbits, with the idea that more complex trajectories must approximately follow these when close. Even these `simple' trajectories, however, can be very difficult to detect, and the fact that they have not been found should not be taken as evidence that they do not exist.

Finding stationary points in systems of ordinary differential equations is straightforward, and for the polynomial systems studied in this work, this reduces to the well-studied problem of finding roots of polynomials. Often, one can be certain that they have found all the stationary points of a system.
Finding periodic orbits, however, is a much harder problem. Indeed, the famous 16th problem of David Hilbert \citep{hilbert}, unsolved for over a century, is concerned with understanding periodic orbits in two-dimensional polynomial systems. In three-dimensions, the problem is even more acute, since periodic orbits can form together in a chaotic attractor. The existence or otherwise of chaos within a system is a difficult problem with practical applications to everything from engineering design to planetary sciences.

Some periodic orbits are born from Hopf bifurcations of stationary points, and these are relatively easy to find. Others, however, appear in so-called global bifurcations. If dynamically unstable, these are can be incredible difficult to detect without a very close guess. One example of this is in the celebrated
 Lorenz \citep{lorenz1963deterministic} equations:
\begin{align}
\label{eq:lorenz}
\begin{split}
    \frac{\mathrm{d}x_1}{\mathrm{d}t} &= \sigma(x_2-x_1),\\
    \frac{\mathrm{d}x_2}{\mathrm{d}t} &= x_1(\rho-x_3)-x_2,\\
    \frac{\mathrm{d}x_3}{\mathrm{d}t} &= x_1 x_2 - \beta x_3.
    \end{split}
\end{align}
These were originally derived as a model for Rayleigh-Bénard convection, and the parameter $\rho$ represents a rescaled Rayleigh number. Like in the fluid dynamical problem, complexity increases as $\rho$ increases (for moderate values), initially having a single globally stable stationary point and eventually exhibiting chaos.

The Lorenz system is most often studied at parameter values $\beta=8/3$, $\sigma=10$, and $\rho>0$.
For $0<\rho<1$, there is a single globally attracting equilibrium at the origin. At $\rho=1$ this undergoes a supercritical pitchfork bifurcation to produce two initially stable equilibria. These remain stable until a subcritical Hopf bifurcation at $\rho=\sigma(\sigma+\beta+3)/(\sigma-\beta-1)\approx 24.7$ \citep{sparrow2012lorenz}.
When $\rho=28$, this system exhibits the famous `butterfly wing' chaotic attractor, which persists as a strange invariant set, along with its infinity of unstable periodic orbits, as low as $\rho=13.927$ \citep{sparrow2012lorenz, chen1996lorenz}. At this value, the heteroclinic connections between the equilibria undergo a global bifurcation, and an `explosion' of periodic orbits occurs.

No periodic orbits have been found for $0<\rho<13.926$, but results guaranteeing their nonexistence are limited. Most of the research has been concerned with guaranteeing that chaos does not exist, which is a stronger condition, but has more practical relevance, and can be investigated through Lyapunov exponents. \citet{martini2022ruling}\citep{martini2023bounding} use a method of \citet{angeli2021robust} to show that positive Lyapunov exponents on the attractor, i.e. chaos, cannot exist for $\rho\leq 2\sqrt{\beta-1}(\beta+\sigma)(\beta+1)/\sigma\beta \approx 4.5$.
\citet{leonov2016lyapunov} proved that if $\rho < (\beta+1)(\beta+\sigma)/\sigma \approx 4.6$, all bounded trajectories converge to one of the equilibria. Above this, they proved an upper bound on the dimension of the global attractor which is strictly greater than 2, and thus does not prohibit chaos.
In this paper, we will show that periodic orbits and chaos cannot exist for $0\leq\rho\leq 12$, a significantly higher maximum value than those mentioned above. In fact, we will prove global stability. This term, coined in this context by \citet{kuznetsov2020birth}, means that all trajectories in the system converge to one of the system's equilibria.

The paper is laid out as follows: In section \ref{sec:theorem}, we present a general Lyapunov function method, and specialise this to the Lorenz system. Section \ref{sec:sos} presents the sum-of-squares relaxation of this method which is computationally tractable. In section \ref{sec:validation}, we derive and apply a new method for the rigorous validation of these results using interval arithmetic. Section \ref{sec:conc} summarises the work and discusses potential extensions to the methods described. An appendix presents an analogous method for maps.

\section{Mathematical results}
\label{sec:theorem}
Let $\frac{\mathrm{d}x}{\mathrm{d}t} = f(x)$ be a dynamical system for a continuously differentiable function $f:\mathbb{R}^n\to\mathbb{R}^n$, such that the corresponding flow $\varphi:\mathbb{R}^n\times\mathbb{R}\to\mathbb{R}^n$ with $\frac{\mathrm{d}}{\mathrm{d}t}\varphi(x,t)=f(x)$ exists and is continuously differentiable for all $x\in\mathbb{R}_{>0}^n$ and $t\in\mathbb{R}$. Recall the following definition \citep{glendinning1994stability}:
\begin{definition}
The \textit{$\omega$-limit set} of a point $x$ with respect to a flow $\varphi$ is the subset of state space which is visited at arbitrary times,
\begin{equation}
    \omega(x,\varphi) = \{y\in\mathbb{R}^n \mid \exists (t_n)\text{ such that }t_n\to\infty,\,\varphi(x,t_n)\to y\text{ as }n\to\infty\}.
\end{equation}

A \textit{limit point} of the system is $y\in\mathbb{R}^n$ such that $y\in\omega(x,\varphi)$ for some $x$. The \textit{limit set} of the system is the collection of all such points.
\end{definition}

\begin{lemma}
\label{thm:main}
    Let $g:\mathbb{R}^n\to\mathbb{R}_{\geq0}$ be a continuous, nonnegative function defined over state space. Suppose there exists a continuously differentiable function $V:\mathbb{R}^n\to\mathbb{R}$ which satisfies
    \begin{equation}
        f(x)\cdot \nabla V(x) \geq g(x)
    \end{equation}
    for all $x\in \mathbb{R}^n$. Then $g(y)=0$ for all limit points $y$.
\end{lemma}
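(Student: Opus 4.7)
My plan is to show that $V$ is monotone along every trajectory, exploit the flow-invariance of $\omega$-limit sets to conclude that $V$ is constant along the orbit through any limit point, and then differentiate that constancy to read off $g(y)=0$.

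First I would fix a limit point $y$, so $\varphi(x,t_n)\to y$ for some $x\in\mathbb{R}^n$ and some sequence $t_n\to\infty$, and consider the scalar function $\Phi(t):=V(\varphi(x,t))$. The chain rule combined with the hypothesis gives
\begin{equation}
\Phi'(t) \;=\; f(\varphi(x,t))\cdot\nabla V(\varphi(x,t)) \;\geq\; g(\varphi(x,t)) \;\geq\; 0,
\end{equation}
so $\Phi$ is monotone nondecreasing. By continuity of $V$, $\Phi(t_n)\to V(y)$; since a monotone nondecreasing real function admits a (possibly infinite) limit at $+\infty$ which must agree with any subsequential limit, this upgrades to $\lim_{t\to\infty}\Phi(t)=V(y)$.

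Next I would promote the result to the orbit through $y$. For any fixed $s\in\mathbb{R}$, continuity of the flow in its initial data gives
\begin{equation}
\varphi(x,t_n+s) \;=\; \varphi(\varphi(x,t_n),s) \;\longrightarrow\; \varphi(y,s).
\end{equation}
Since $t_n+s\to\infty$, we have both $\Phi(t_n+s)\to V(\varphi(y,s))$ (continuity of $V$) and $\Phi(t_n+s)\to V(y)$ (by the previous step), so $V(\varphi(y,s))\equiv V(y)$ as a function of $s$. Differentiating at $s=0$ yields $f(y)\cdot\nabla V(y)=0$, which combined with the inequality hypothesis gives $0=f(y)\cdot\nabla V(y)\geq g(y)\geq 0$, and hence $g(y)=0$.

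The only delicate point I anticipate is the promotion of the subsequential limit of $\Phi$ to a full limit at $+\infty$: this is the pivot of the argument and relies entirely on monotonicity, not on any compactness or boundedness assumption on the trajectory. The remaining ingredients --- chain rule, continuous dependence on initial conditions, and differentiating a constant function --- are routine, and the hypothesis is essentially tailored so that the derivative of $V$ along trajectories directly controls $g$.
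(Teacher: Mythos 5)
Your proof is correct, and it takes a genuinely different route from the paper's. The paper integrates the inequality $\frac{\mathrm d}{\mathrm dt}V(\varphi(x_0,t))\geq g(\varphi(x_0,t))$ over the windows $[t_{n-1},t_n]$, observes that the telescoping differences $V(\varphi(x_0,t_n))-V(\varphi(x_0,t_{n-1}))\to 0$ by continuity of $V$ at the limit point, deduces $\int_{t_{n-1}}^{t_n}g(\varphi(x_0,t))\,\mathrm dt\to 0$, and then asserts that $g(\varphi(x_0,t))\to 0$ as $t\to\infty$, from which $g(y)=0$ follows by continuity of $g$. The penultimate step is actually the weak point of the paper's argument: the vanishing of integrals over possibly shrinking windows $[t_{n-1},t_n]$ does not by itself force pointwise convergence of the nonnegative integrand, so a little more work (a continuity-of-$g$ plus local boundedness-of-$f$ argument near $y$) is needed to close that gap. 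Your route sidesteps the issue entirely. You use the monotonicity of $\Phi(t)=V(\varphi(x,t))$ to upgrade the subsequential limit $\Phi(t_n)\to V(y)$ to a full limit at $+\infty$, then exploit continuous dependence of the flow on initial data to transport this along the orbit through $y$, concluding $V(\varphi(y,s))\equiv V(y)$; differentiating at $s=0$ and squeezing against the hypothesis gives $g(y)=0$. This is essentially the observation that $\omega$-limit sets are flow-invariant and a monotone Lyapunov-like function must be constant on them, which is a standard and robust pattern. What your approach buys is rigour and modularity; what the paper's buys (when patched) is brevity and a slightly weaker smoothness requirement on the flow, since it never needs to differentiate along the orbit through $y$ itself. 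One minor point worth flagging explicitly in your write-up: you are implicitly using that the orbit through $y$ exists for all $s$ (or at least on a neighbourhood of $0$), which the paper's standing assumption on the flow guarantees.
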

\begin{proof}
Suppose that $V:\mathrm{R}^n\to\mathbb{R}$ satisfies the condition of the theorem.

Choose any $x_0\in\mathbb{R}^n$ and $y\in \omega(x_0,\varphi)$. In particular, let $(t_n)$ be a sequence of times such that $\lim_{n\to\infty} \varphi(x_0,t_n) = y$. Then
\begin{align*}
    V(\varphi(x_0,t_n)) - V(\varphi(x_0,t_{n-1})) &= \int_{t_{n-1}}^{t_n} \frac{\mathrm{d}}{\mathrm{d}t} V(\varphi(x_0,t)) \mathrm{d} t \\ 
    &=\int_{t_{n-1}}^{t_n} f(\varphi(x_0,t)) \cdot \nabla V(\varphi(x_0,t))\mathrm{d} t \\
    &\geq \int_{t_{n-1}}^{t_n} g(\varphi(x_0,t)) \mathrm{d} t \\
    &\geq 0.
\end{align*}
But $V(\varphi(x_0,t_n)) - V(\varphi(x_0,t_{n-1})) \to 0$ as $n\to\infty$ by the continuity of $V$, so
\begin{equation*}
\int_{t_{n-1}}^{t_n} g(\varphi(x_0,t)) \mathrm{d} t\to 0 \text{ as } n\to\infty.
\end{equation*}

Since $g(x)\geq0$ for all $x$, we must therefore have $g(\varphi(x_0,t))\to 0$ as $t\to\infty$, and in particular, 
\begin{equation*}g(y) = \lim_{n\to\infty} g(\varphi(x_0,t_n)) = 0.\end{equation*}

\end{proof}

An immediate application of this result is the following:




\begin{theorem}
\label{thm:globalstability}
Suppose there exists a smooth function $V:\mathbb{R}^n\to\mathbb{R}$ which satisfies
    \begin{equation}
        f(x)\cdot \nabla V(x) \geq \|f(x)\|^2
    \end{equation}
    for all $x\in \mathbb{R}^n$. Then the limit set of the system $f$ is the set of stationary points.
\end{theorem}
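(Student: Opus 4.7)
The plan is to apply Lemma \ref{thm:main} directly with the specific choice $g(x) = \|f(x)\|^2$, and then verify the set-theoretic equality between limit points and stationary points.

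First, I would check that $g(x) = \|f(x)\|^2$ satisfies the hypotheses of Lemma \ref{thm:main}: it is continuous (since $f$ is continuously differentiable, in particular continuous) and nonnegative (as a squared Euclidean norm). The inequality $f(x)\cdot \nabla V(x) \geq g(x)$ is precisely the assumption of the theorem. Hence the lemma applies.

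Next, I would extract the two inclusions. From Lemma \ref{thm:main}, every limit point $y$ satisfies $\|f(y)\|^2 = 0$, so $f(y) = 0$ and $y$ is a stationary point; this gives containment of the limit set in the set of stationary points. For the reverse inclusion, I would note that if $x^*$ is a stationary point, then $\varphi(x^*,t) = x^*$ for all $t \in \mathbb{R}$, so taking any sequence $t_n \to \infty$ (say $t_n = n$) gives $\varphi(x^*,t_n) \to x^*$, whence $x^* \in \omega(x^*,\varphi)$ and $x^*$ is a limit point.

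There is no real obstacle here; the work has all been done in Lemma \ref{thm:main}, and the theorem is essentially a corollary obtained by specialising $g$ to the natural quantity whose vanishing characterises equilibria. The only minor subtlety worth flagging is ensuring that the phrase \emph{limit set} in the theorem statement is interpreted as in the definition preceding the lemma (collection of all $\omega$-limit points), so that the reverse inclusion via the trivial trajectory $\varphi(x^*,t)=x^*$ is legitimate.
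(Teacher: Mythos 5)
Your proposal is correct and follows the same approach as the paper: apply Lemma \ref{thm:main} with $g(x)=\|f(x)\|^2$ to conclude all limit points are stationary, and observe the converse via the constant trajectory at a stationary point. You simply spell out a few details (checking $g$ is continuous and nonnegative, exhibiting the sequence $t_n\to\infty$) that the paper leaves implicit.
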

\begin{proof}
For every limit point $x$, we know from \cref{thm:main} that $\|f(x)\|^2=0$ and thus $f(x)=0$, i.e. the point is a stationary point. Conversely, any stationary point is its own limit point.
\end{proof}

Notice that this implies $f\cdot\nabla V$ is strictly positive away from stationary points, so this condition is a special case of a strict Lyapunov function as discussed by \citet{barta2012every}, and they prove that this means the system is a gradient system when restricted to the subset of non-stationary points.
Such systems are known by various names \citep{kuznetsov2020birth}.
We prefer the term \textit{gradient-like} systems \citep{hale2004stability}, since an explicit gradient is not easily found. Nevertheless, the behaviour of such a system is particularly simple. Certainly no periodic orbits can exist, since any periodic point is contained in its own $\omega$-limit set, and these points must necessarily be non-stationary.
Similarly, quasiperiodic motion is forbidden in a gradient-like system, though this becomes more relevant in higher dimensions than 3.
There is no universally agreed definition of chaos, but the most widely cited is that of \citet{devaney2018introduction}, who takes as an axiom that periodic points are dense within a chaotic set. Therefore, gradient-like systems are non-chaotic. Note that, however, a gradient-like system does not imply that all limit points are \textit{isolated} stationary points.


\subsection{Application to the Lorenz system}
If there are no unbounded solutions to a gradient-like system, for example because of the existence of globally absorbing compact set (sometimes called \textit{dissipativity in the sense of Levinson}), we can further deduce that all trajectories are attracted to the stationary set. This property has been called \textit{global stability} \citep{kuznetsov2020birth}, and it is what we shall attempt to prove for the Lorenz system.
First recall that \citep{lorenz1963deterministic}:
\begin{theorem}
\label{thm:dissipative}
    The system \labelcref{eq:lorenz} is dissipative in the sense of Levinson.
\end{theorem}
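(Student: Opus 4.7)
The plan is to exhibit a quadratic Lyapunov-type function whose sublevel sets are compact and which decreases strictly along trajectories outside some bounded region; this is essentially Lorenz's original argument. Specifically, I would try
\begin{equation*}
V(x_1,x_2,x_3) = x_1^2 + x_2^2 + (x_3 - \rho - \sigma)^2,
\end{equation*}
so that the sublevel sets $\{V\leq c\}$ are nested closed balls centred at $(0,0,\rho+\sigma)$.

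The first step is to compute $\dot V = f(x)\cdot\nabla V(x)$ directly from \cref{eq:lorenz}. The guiding observation is that the term $-2x_1 x_2 x_3$ from $\dot x_2$ and the term $+2x_1 x_2 x_3$ from $\dot x_3$ will cancel, which is the only reason a purely quadratic $V$ has any chance of working; similarly the linear-in-$x_1 x_2$ terms will cancel because of the specific offset $\rho+\sigma$ chosen in the $x_3$ coordinate. After this cancellation one is left with
\begin{equation*}
\dot V = -2\sigma x_1^2 - 2 x_2^2 - 2\beta x_3^2 + 2\beta(\rho+\sigma)x_3,
\end{equation*}
and completing the square in $x_3$ gives
\begin{equation*}
\dot V = -2\sigma x_1^2 - 2 x_2^2 - 2\beta\bigl(x_3 - \tfrac{\rho+\sigma}{2}\bigr)^2 + \tfrac{\beta(\rho+\sigma)^2}{2}.
\end{equation*}

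The second step is to read off that $\dot V$ is strictly negative outside the (bounded) ellipsoid $E$ on which the quadratic form equals the constant on the right-hand side. Let $B$ be any closed ball, centred at $(0,0,\rho+\sigma)$, large enough to contain $E$ in its interior; on the boundary $\partial B$ we have $\dot V \leq -\delta < 0$ for some $\delta>0$. This makes $B$ positively invariant, and by compactness of $\{V\leq c\}\setminus \mathrm{int}(B)$ one obtains a uniform upper bound $\dot V \leq -\delta$ there, so every trajectory enters $B$ in finite time and stays. Hence $B$ is a globally absorbing compact set.

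The only mildly delicate step is the algebraic cancellation described above; everything else is standard. There is no real obstacle, but one should note that this argument simultaneously shows global existence of the flow for all $t\in\mathbb{R}_{\geq 0}$, which is needed to talk about $\omega$-limit sets in the first place and which was tacitly assumed in the set-up preceding \cref{thm:main}.
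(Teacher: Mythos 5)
The paper does not actually prove this statement; it is stated and cited to \citet{lorenz1963deterministic}. Your argument is correct and is essentially the standard one from the literature: with $V=x_1^2+x_2^2+(x_3-\rho-\sigma)^2$, the $x_1x_2x_3$ and $x_1x_2$ contributions cancel as you say, $\dot V$ becomes a negative-definite quadratic plus a constant, and any sufficiently large sublevel ball of $V$ is a compact absorbing set, which also gives forward-global existence of the flow as you note.
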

In fact, various different simple, globally attracting sets can be found, the compact intersection of which can be used to estimate the shape of the attractor \citep{doering1995shape}.

For the remainder of this work -- though there is no reason to suppose these methods cannot be easily applied to other parameter values -- we shall concentrate on the classic case of $\sigma=10$ and $\beta=8/3$. This poses a slight problem for the numerical validation, since $8/3$ cannot be exactly represented using IEEE 754 floating point numbers. Therefore, we rescale \labelcref{eq:lorenz} so that $x_1' = x_1/M$, $x_2' = x_2/M$, $x_3' = x_3/M$ and $t' = t/3$, and also let $r = (\rho-1)/M$, so that (immediately dropping the primes),
\begin{align}
\label{eq:lorenzresc}
\begin{split}
    \frac{\mathrm{d}x_1}{\mathrm{d}t} &= 30x_2-30x_1,\\
    \frac{\mathrm{d}x_2}{\mathrm{d}t} &= 3Mx_1r- 3Mx_1x_3-3x_2+3x_1,\\
    \frac{\mathrm{d}x_3}{\mathrm{d}t} &= 3Mx_1 x_2 - 8x_3,
\end{split}\end{align}
where $M$ is an arbitrary integer factor chosen the rescale the numerics \citep{goluskin2018bounding,parker2021study}, which will be helpful later. 
The following result is an alternative to \cref{thm:globalstability} which will lead to a less computationally complex sum-of-squares program.
\begin{proposition}
\label{thm:lorenz}
Let $f:\mathbb{R}^3\to\mathbb{R}^3$ be defined by 
\begin{equation}
    f(x_1,x_2,x_3)=\left(30x_2-30x_1,\,3Mx_1r- 3Mx_1x_3-3x_2+3x_1,\,3Mx_1 x_2 - 8x_3\right).
\end{equation}    
If there exists $V:\mathbb{R}^3\to\mathbb{R}^3$ such that
    \begin{equation}
        f(x)\cdot\nabla V(x) \geq (x_2-x_1)^2
    \end{equation}
    then the system \labelcref{eq:lorenzresc} is globally stable.
\end{proposition}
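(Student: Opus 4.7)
The plan is to combine Theorem \ref{thm:dissipative} with Lemma \ref{thm:main} applied to $g(x)=(x_2-x_1)^2$, and then exploit invariance of $\omega$-limit sets to strengthen the weak conclusion ``$y_1=y_2$ on the limit set'' to ``every limit point is a stationary point''. Dissipativity guarantees that every trajectory is bounded with a nonempty compact $\omega$-limit set $\omega(x_0,\varphi)$ lying in a fixed absorbing set, and Lemma \ref{thm:main} then gives $y_1=y_2$ for every limit point $y$.

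Because the $\omega$-limit set of a bounded orbit is invariant under the flow in both forward and backward time, for any $y\in\omega(x_0,\varphi)$ the entire orbit $\psi(t):=\varphi(y,t)$, $t\in\mathbb{R}$, lies in $\omega(x_0,\varphi)$, so $\psi_1(t)=\psi_2(t)$ for all $t$. The first component of $f$ then gives $\dot\psi_1=30(\psi_2-\psi_1)=0$, forcing $\psi_1(t)\equiv\psi_2(t)\equiv y_1$ constant. Substituting into the second component yields $0=\dot\psi_2=3My_1\bigl(r-\psi_3(t)\bigr)$, so either $y_1=0$ or $\psi_3(t)\equiv r$. If $y_1=0$, then $\dot\psi_3=-8\psi_3$ integrates to $\psi_3(t)=y_3 e^{-8t}$, which is unbounded as $t\to-\infty$ unless $y_3=0$; compactness of $\omega(x_0,\varphi)$ then forces $y=0$. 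If instead $\psi_3\equiv r$, then $\dot\psi_3=3My_1^2-8r=0$ gives $y_1^2=8r/(3M)$, which either has no real solutions (when $r<0$) or recovers the two nontrivial equilibria of \labelcref{eq:lorenzresc}.

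Every limit point is therefore one of the at most three isolated equilibria of the rescaled system. Since $\omega$-limit sets of bounded continuous-time trajectories are connected, each $\omega(x_0,\varphi)$ must be a single equilibrium, and hence every trajectory converges to an equilibrium, giving global stability. The main subtlety is the appeal to backward invariance of the $\omega$-limit set: without it one cannot rule out non-stationary limit points of the form $(0,0,x_3)$ with $x_3\neq 0$, since these decay to the origin in forward time and are consistent with the forward-invariance and boundedness of the limit set. It is only the backward dynamics $\psi_3(t)=y_3 e^{-8t}\to\pm\infty$ that exposes them as inconsistent with lying in a compact absorbing set.
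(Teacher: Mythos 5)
Your proof is correct and follows the paper's overall strategy (dissipativity plus Lemma~\ref{thm:main} with $g=(x_2-x_1)^2$, then invariance of the $\omega$-limit set to propagate the constraint $x_2=x_1$ through the vector field and split into the two cases $x_1=0$ and $x_3=r$). Where you differ is in the treatment of the axis case $x_1=x_2=0$: the paper appeals to the fact that the restricted 1D flow $\dot x_3=-8x_3$ is a gradient system and so ``all bounded trajectories are attracted to stationary points,'' whereas you use backward invariance of the compact $\omega$-limit set together with $\psi_3(t)=y_3e^{-8t}\to\pm\infty$ as $t\to-\infty$ to force $y_3=0$ directly. Your version is the cleaner argument, because the paper's phrasing is really a statement about forward convergence of trajectories starting on the axis, which does not by itself exclude points $(0,0,x_3)$ with $x_3\neq 0$ from being limit points of trajectories off the axis; the backward-time blow-up is the fact that actually rules these out, and you correctly identify this as the crux. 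You also make explicit the final step that the paper leaves implicit: passing from ``every limit point is an equilibrium'' to ``every trajectory converges to a single equilibrium'' via connectedness of $\omega$-limit sets and isolation of the equilibria. As a minor point, your value $y_1^2=8r/(3M)$ is the correct one from $\dot\psi_3=3My_1^2-8r=0$; the paper's line $0=\dot x_3=x_1^2-8r$ has dropped the $3M$ factor, though this typo does not affect its conclusion.
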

\begin{proof}
The system is equivalent to \labelcref{eq:lorenz} and therefore dissipative by \cref{thm:dissipative}.

    Suppose such a $V$ exists. By \cref{thm:main}, we then have that $x_2-x_1=0$ for every limit point $x$.
    Then at a limit point
    \begin{equation*}
    0 = \frac{\mathrm{d}}{\mathrm{d}t}(x_2-x_1) = f_2(x)-f_1(x) = f_2(x) - 30(x_2-x_1) = f_2(x) = 3Mx_1r- 3Mx_1x_3-3x_2+3x_1.
    \end{equation*}
Therefore $x_1(r-x_3) = 0$.

If $x_1=x_2=0$, we have a one-dimensional system $\frac{\mathrm{d}x_3}{\mathrm{d}t}=-8 x_3$. Taking the antiderivative as a gradient, all one-dimensional systems are gradient systems and thus all bounded trajectories are attracted to stationary points.

Otherwise, $x_3=r$, and $0=\frac{\mathrm{d}x_3}{\mathrm{d}t}=x_1^2-8r$, so $x_1 = x_2 = \pm\sqrt{8r}$, which are the two non-trivial equilibria of the system when $r>0$. So all limit points are stationary points.

Furthermore, the system is dissipative in the sense of Levinson so it is globally stable.
\end{proof}

As stated, \cref{thm:lorenz} could only prove global stability for a particular value of the parameter $r$. We would like to numerically prove global stability for a range of $r$, which suggests treating it as a fourth variable in the system. This leads to the following result:
\begin{proposition}
\label{thm:lorenzrange}
Let $f:\mathbb{R}^3\times\mathbb{R}\to\mathbb{R}^3$ be defined by 
\begin{equation}
\label{eq:systemtosolve}
    f(x_1,x_2,x_3;r)=\left(30x_2-30x_1,\,3Mx_1r- 3Mx_1x_3-3x_2+3x_1,\,3Mx_1 x_2 - 8x_3\right).
\end{equation}
If there exists $V:\mathbb{R}^3\times\mathbb{R}\to\mathbb{R}$ and $s:\mathbb{R}^3\times\mathbb{R}\to\mathbb{R}_{\geq0}$ such that
    \begin{equation}
        f(x;r)\cdot\nabla V(x;r) \geq (x_2-x_1)^2 + (Mr+1-\rho_{min})(\rho_{max}-Mr-1) s(x;r)
    \end{equation}
then the system \labelcref{eq:lorenz} with $\beta=8/3$ and $\sigma=10$ is globally stable for $\rho_{min} \leq \rho \leq \rho_{max}$.
\end{proposition}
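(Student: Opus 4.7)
The plan is to reduce Proposition \ref{thm:lorenzrange} to the already-proved Proposition \ref{thm:lorenz} by freezing the parameter $r$ at any admissible value and using the sign of the bracketed quadratic to absorb the $s$-term harmlessly.

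First I would fix any $\rho\in[\rho_{min},\rho_{max}]$ and set $r=(\rho-1)/M$, so that $Mr+1=\rho$. The key observation is that the polynomial $(Mr+1-\rho_{min})(\rho_{max}-Mr-1)=(\rho-\rho_{min})(\rho_{max}-\rho)$ is nonnegative precisely on the closed interval $[\rho_{min},\rho_{max}]$, hence nonnegative at our chosen $\rho$. Since $s(\,\cdot\,;r)\geq 0$ by hypothesis, the entire extra term on the right-hand side of the hypothesized inequality is nonnegative for this value of $r$.

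Next I would define the (continuously differentiable) function $\tilde V:\mathbb{R}^3\to\mathbb{R}$ by $\tilde V(x):=V(x;r)$ for the fixed $r$. Specialising the hypothesis of the proposition to this slice of $r$ and dropping the manifestly nonnegative $s$-term gives
\begin{equation*}
    f(x;r)\cdot\nabla \tilde V(x)\;\geq\;(x_2-x_1)^2
\end{equation*}
for all $x\in\mathbb{R}^3$, where $\nabla$ now denotes the gradient in $x$ only. This is exactly the hypothesis of \cref{thm:lorenz} for the rescaled Lorenz system \labelcref{eq:lorenzresc} at parameter $r$.

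Applying \cref{thm:lorenz} then yields global stability of the rescaled system at this $r$, and therefore (by the rescaling $x_i=Mx_i'$, $t=3t'$, $\rho=Mr+1$) of the original system \labelcref{eq:lorenz} at this $\rho$. Since $\rho\in[\rho_{min},\rho_{max}]$ was arbitrary, the conclusion follows. I do not anticipate a real obstacle here: the only subtlety is recognising that treating $r$ as an additional variable is a device to certify a whole one-parameter family of Lyapunov-type functions with a single SOS program, and that the $s$-term is a standard Positivstellensatz multiplier whose nonnegativity on the parameter interval is built into the factorisation $(Mr+1-\rho_{min})(\rho_{max}-Mr-1)$.
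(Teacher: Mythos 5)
Your proof is correct and follows essentially the same route as the paper's: fix $r=(\rho-1)/M$ for any $\rho\in[\rho_{\min},\rho_{\max}]$, observe that $(Mr+1-\rho_{\min})(\rho_{\max}-Mr-1)\geq 0$ and $s\geq 0$ make the extra term droppable, and then invoke \cref{thm:lorenz}. Your explicit remark that $\nabla$ is the gradient in $x$ only (so that the dot product with the $3$-vector $f$ makes sense) is a small but worthwhile clarification that the paper leaves implicit.
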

\begin{proof}
System \labelcref{eq:lorenz} is equivalent to system \labelcref{eq:lorenzresc} with $r=(\rho-1)/M$. If $\rho_{min} \leq \rho \leq \rho_{max}$, then 
\begin{equation*}
    (Mr+1-\rho_{min})(\rho_{max}-Mr-1)\geq 0,
\end{equation*}
and so
\begin{equation*}
        f(x;r)\cdot\nabla V(x;r) \geq (x_2-x_1)^2
\end{equation*}
since $s\geq 0$. Therefore, by \cref{thm:lorenz}, system \labelcref{eq:lorenzresc} is globally stable and hence also system \labelcref{eq:lorenz}.
\end{proof}

Introducing the second auxiliary function $s$ is an example of the generalised S-procedure, which is frequently used in sum-of-squares optimization \citep{tanthesis,fantuzzi2016bounds}.
Notice that with $r$ as a variable, the system \labelcref{eq:systemtosolve} is a polynomial system with integer coefficients which can be exactly represented by a computer (assuming $M$ is not unreasonably large).

\section{Sum-of-squares relaxation}
\label{sec:sos}
Finding Lyapunov functions for stability problems can be difficult, but one systematic approach is through sum-of-squares optimization \citep{papachristodoulou2002construction}. If the system and the auxiliary functions are both restricted to be polynomials of relatively low degree, the problem often becomes computationally tractable. Verifying that a polynomial is non-negative is NP-hard \citep{murty1985some}, but verifying that it is a sum of squares of polynomials, a stronger but sufficient condition, is possible through semi-definite programming \citep{parrilo2003semidefinite}.

Let $P_n$ be the space of polynomials in $n$ real variables (of any degree).
For a polynomial $f \in P_n$, we say that $f$ is a sum-of-squares, written $f\in \Sigma_n$, if there exist polynomials $p_1,...,p_m\in P_n$ such that
\begin{equation*}
    f = \sum_{k=1}^m p_k^2.
\end{equation*}
Clearly $f\in\Sigma_n$ is a non-negative polynomial, though in general the converse is not true.
Let $\mathbb{S}_m$ be the space of $m\times m$ symmetric matrices which are positive semi-definite.
An important result \citep{choi1995sums} is that $f\in\Sigma_n$ if and only if there exists a (non-unique) $Q\in \mathbb{S}_m$ and a choice of `basis' functions $b_1,\dots,b_m \in P_n$ such that
$f = \mathbf{b}^T Q \mathbf{b}$. The matrix $Q$ is called the Gram matrix.

\begin{figure}
    \centering
    \includegraphics[width=0.8\textwidth]{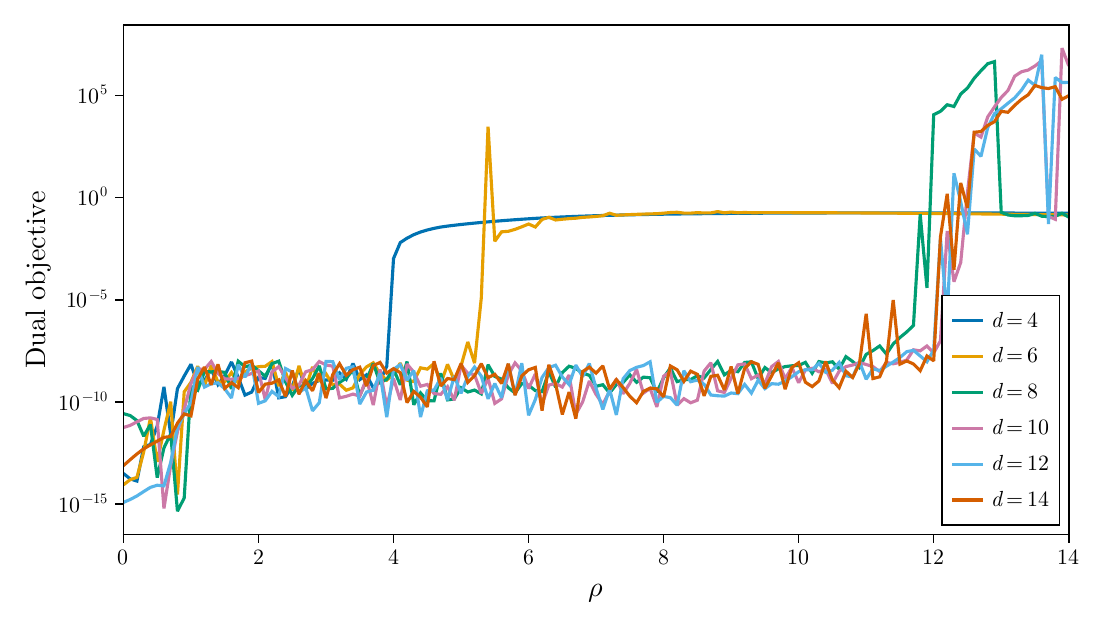}
    \caption{Dual objective value reported by the SDP solver for a na\"ive implementation of \cref{thm:lorenz} (with $M=16$). A value close to zero indicates that the SDP is likely to be feasible, i.e. that the Lorenz system is indeed gradient-like at that value of $\rho$.}
    \label{fig:lorenz}
\end{figure}

A straightforward relaxation of \cref{thm:lorenz} gives a sum-of-squares program, which itself is equivalent to a semi-definite program (SDP) which can be solved numerically with a number of existing software packages. We employ the Julia package SumOfSquares.jl \citep{legat2017sos,weisser2019polynomial}, on top of the MOSEK SDP optimizer \citep{MOSEK}. \Cref{fig:lorenz} shows the dual objective value, effectively a residual, returned by MOSEK for this SDP, with different choices of $\rho$ and polynomial degree. A dual objective value close to zero indicates that the problem is likely to be feasible. For $d=4$ and $d=6$, a significant jump in the value is observed around $\rho=4$ and $\rho=5.5$ respectively, indicating where a Lyapunov function of that degree is no longer sufficient. For $d=8$, a clear jump is observed around $\rho=12$, and increasing the degree further does not improve this significantly.
These results suggest that the Lorenz system is gradient-like up to at least $\rho=12$.
However, these numerical results are only indicative and are not proofs, and furthermore they are valid only at a a given fixed value of $\rho$ and are not useful for the general results we want to prove.
Therefore we relax \cref{thm:lorenzrange} to the following:
\begin{proposition}
\label{thm:sos}
Let $f_1,f_2,f_3\in P_4$ be defined by
\begin{equation}
\label{eq:thmsys}
    f(x_1,x_2,x_3;r)=\left(30x_2-30x_1,\,3Mx_1r- 3Mx_1x_3-3x_2+3x_1,\,3Mx_1 x_2 - 8x_3\right).
\end{equation}
If there exist polynomials $V,b_1,\dots,b_{m_2}\in P_4$ and $s,a_1,\dots,a_{m_1}\in P_3$ and matrices $Q\in \mathbb{S}_{m_1}$ and $R\in \mathbb{S}_{m_2}$ such that
\begin{align}
\label{eq:SDP}
        s &= \mathbf{a}^T Q \mathbf{a}, \\
        f_1\partial_1 V + f_2\partial_2 V + f_3\partial_3 V - (x_2-x_1)^2 - M^2(r-r_{min})(r_{max}-r) s &= \mathbf{b}^T R \mathbf{b},
    \end{align}
then the system \labelcref{eq:lorenz} with $\beta=8/3$ and $\sigma=10$ is globally stable for $\rho_{min} \leq \rho \leq \rho_{max}$, where $\rho_{min} = Mr_{min}+1$ and $\rho_{max} = Mr_{max}+1$.
\end{proposition}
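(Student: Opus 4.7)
The plan is to show that the sum-of-squares and PSD certificates of \cref{thm:sos} are precisely a computationally verifiable relaxation of the pointwise inequalities required by \cref{thm:lorenzrange}, so the proof reduces to a straightforward unpacking of definitions followed by an appeal to the earlier proposition.

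First, I would record the pointwise consequences of the PSD assumptions. For any $(x;r)\in\mathbb{R}^3\times\mathbb{R}$, since $Q\in\mathbb{S}_{m_1}$ is positive semi-definite, $s(x;r)=\mathbf{a}(x;r)^T Q\,\mathbf{a}(x;r)\geq 0$, so $s$ qualifies as the nonnegative auxiliary function demanded by \cref{thm:lorenzrange}. Similarly, $\mathbf{b}(x;r)^T R\,\mathbf{b}(x;r)\geq 0$, so the second equation in \labelcref{eq:SDP} rearranges to
\begin{equation*}
f(x;r)\cdot\nabla V(x;r) \;\geq\; (x_2-x_1)^2 + M^2(r-r_{min})(r_{max}-r)\, s(x;r).
\end{equation*}

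Second, I would reconcile the quadratic factor $M^2(r-r_{min})(r_{max}-r)$ with the factor $(Mr+1-\rho_{min})(\rho_{max}-Mr-1)$ appearing in \cref{thm:lorenzrange}. Using $\rho_{min}=Mr_{min}+1$ and $\rho_{max}=Mr_{max}+1$, a direct substitution gives
\begin{equation*}
(Mr+1-\rho_{min})(\rho_{max}-Mr-1) = M(r-r_{min})\cdot M(r_{max}-r) = M^2(r-r_{min})(r_{max}-r),
\end{equation*}
so the two inequalities are identical. Consequently the hypothesis of \cref{thm:lorenzrange} is met with the same $V$ and $s$, and the conclusion of global stability for $\rho_{min}\leq\rho\leq\rho_{max}$ follows immediately.

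There is no genuine mathematical obstacle here: the content of the result is entirely absorbed by \cref{thm:lorenzrange}. The only care needed is bookkeeping, making sure the PSD conditions do give pointwise nonnegativity (true because $Q,R$ are symmetric PSD and the vectors $\mathbf{a},\mathbf{b}$ are real), and that the rescaling of the $\rho$ interval is handled consistently. The substance of \cref{thm:sos} is not in this proof but in what it enables, namely that verifying its hypotheses is now reduced to a semi-definite feasibility problem, which is the topic of the subsequent sections.
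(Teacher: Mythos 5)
Your proof is correct and is exactly the argument the paper leaves implicit (the paper states Proposition~\ref{thm:sos} as an immediate relaxation of Proposition~\ref{thm:lorenzrange} without a written proof). The two key observations — that the Gram-matrix representations give pointwise nonnegativity of $s$ and of $f\cdot\nabla V - (x_2-x_1)^2 - M^2(r-r_{min})(r_{max}-r)s$, and that the substitution $\rho_{min}=Mr_{min}+1$, $\rho_{max}=Mr_{max}+1$ reconciles the two quadratic factors — are precisely the bookkeeping needed to invoke Proposition~\ref{thm:lorenzrange}.
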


For fixed bases $\mathbf{a}$ and $\mathbf{b}$, this is a semi-definite program, which can be solved numerically for the coefficients of $Q$ and $R$ and the coefficients of the polynomials $V$ and $s$ with respect to the monomial basis. Nevertheless, such optimisations are slow, imprecise and difficult to validate if applied naively, and can be improved by a careful choice of bases.

The first observation is that \labelcref{eq:lorenzresc} is invariant under the symmetry transformation $(x_1,x_2,x_3,r)\mapsto(-x_1,-x_2,x_3,r)$, and so if this is true for $V$ and $s$, the left-hand sides of \labelcref{eq:SDP} is also invariant. In order for the right-hand sides to be invariant under this symmetry, the Gram matrices are necessarily block diagonal, and we can subdivide the bases and matrices such that
\begin{align*}
\mathbf{a}^T Q \mathbf{a} = \mathbf{a_e}^T Q_e \mathbf{a_e} + \mathbf{a_o}^T Q_o \mathbf{a_o},\\
\mathbf{b}^T R \mathbf{b} = \mathbf{b_e}^T R_e \mathbf{b_e} + \mathbf{b_o}^T R_o \mathbf{b_o},
\end{align*}
where $\mathbf{a_e}$ and $\mathbf{b_e}$ are invariant under the symmetry transformation and $\mathbf{a_o}$ and $\mathbf{b_o}$ change sign. Since $\mathbf{a_e}$ and $\mathbf{a_o}$ are roughly half the size of $\mathbf{a}$, we have reduced the number of unknown matrix entries by a factor of four, which results in a very significant speedup of both SDP and validation.

Secondly, notice that at a stationary point of the system, $f_1\partial_1 V + f_2\partial_2 V + f_3\partial_3 V - (x_2-x_1)^2 = 0$. Therefore, for the equalities and definiteness to hold, we must have that $s=0$ and also $\mathbf{a}^T Q \mathbf{a} = \mathbf{b}^T R \mathbf{b} = 0$ at stationary points. This suggests choosing $\mathbf{a_e},\mathbf{b_e},\mathbf{a_o},\mathbf{b_o}$ which themselves vanish at the stationary points, as was done (at much lower degree) in the analytic results of \citet{goluskin2018bounding}.

One further optimisation would be to choose a specific form of the polynomials $V$ and $s$, following \citet{goluskin2018bounding}. This was found to make little difference to performance, so for clarity of code we simply used a set monomials invariant under the symmetry transformations. The Lyapunov function $V$ was taken to be a polynomial of degree at most $d_x$ in $x_1$, $x_2$ and $x_3$ combined, and of maximum degree $d_r$ in $r$ (so that terms like $x_1^{d_x} r$ are allowed, but not $x_1^{d_x} x_2$). The function $s$ was taken to be a polynomial of maximum degree $d_x$ in $x$ and maximum degree $d_r-1$ in $r$.

For the validation procedure described in section \ref{sec:validation} to be successful, it will be necessary that the Gram matrices are strictly positive definite, rather than simply semi-definite. In other words, we need non-singular Gram matrices. We can attempt to enforce this by a parsimonious choice of basis, so that no unnecessary zero eigenvectors are included. Successful bases were found through trial-and-error, using the principles above, and these are listed in \cref{tab:bases} for $d_x=8$. Notice that all of the polynomials listed vanish at each of the system's three equilibria.
Partial results, with lower $\rho_{max}$ and lower $d_x$ were also successful; the bases for these are included in the source code.

\begin{table}[h]
    \centering
    \begin{tabular}{c|c}
        $\mathbf{a}_e$ & $(x_2-x_1)(x_1,x_2,x_1x_3,x_2x_3,x_1x_3^2,x_2x_3^2), (3Mx_1x_2-8x_3)(1,x_3,x_1^2,x_1x_2,x_2^2,x_3^2)$ \\ \hline
        $\mathbf{a}_o$ & $(x_2-x_1)(1,x_3,x_1^2,x_1x_2,x_2^2,x_3^2,x_1^2x_3,x_1x_2x_3,x_2^2x_3),(3Mx_1x_2-8x_3)(x_1,x_2,x_1x_3,x_2x_3)$\\\hline
         & $(x_2-x_1)(x_1,x_2,x_1x_3,x_2x_3,x_1x_3^2,x_2x_3^2,rx_1,rx_2,rx_1x_3,x_2x_3,rx_1x_3^2,rx_2x_3^2),$ \\
        $\mathbf{b}_e$ & $(r-x_3)(x_3,x_1^2,x_1x_2,x_2^2,x_3^2,x_1^2x_3,x_1x_2x_3,x_2^2x_3)$, \\
         & $(3Mrx_2^2-8x_3^2)(1,x_3,x_1^2,x_1x_2,x_2^2,x_3^2)$ \\\hline
          & $(x_2-x_1)(1,x_3,x_1^2,x_1x_2,x_2^2,x_3^2,x_1^2x_3,x_1x_2x_3,x_2^2x_3,r,rx_3,rx_1^2,rx_1x_2,rx_2^2,rx_3^2,rx_1^2x_3,rx_1x_2x_3)$,\\
        $\mathbf{b}_o$ & $(r-x_3)(x_1,x_2,x_1x_3,x_2x_3,x_1x_3^2,x_2x_3^2)$, \\
        & $(3Mrx_2^2-8x_3^2)(x_1,x_2,x_1x_3,x_2x_3)$
    \end{tabular}
    \caption{Bases of polynomials used for Gram matrices, for Lyapunov function of degree $(d_x,d_r) = (8,1)$. Notice that $\mathbf{a}_e$ and $\mathbf{a}_o$ have maximum degree 4 in $x$ and do not depend on $r$, and $\mathbf{b}_e$ and $\mathbf{b}_o$ have maximum degree 4 in $x$ and 1 in $r$.}
    \label{tab:bases}
\end{table}

\section{Validation with interval arithmetic}
\label{sec:validation}
The rigorous validation of numerical results of dynamical systems theory through the use of interval arithmetic have been some of the major successes of the field in the last few decades \citep{tucker1999lorenz,figueras2017numerical,wilczak2020geometric}.
However, once it has been demonstrated to be possible, it is not usually considered necessary to validate every single result. For example, if one has converged a periodic orbit to machine precision for a given system, and the location and period of this periodic orbit appears to be robust to variations in the timestep or discretisation of the numerical solution, then there is every reason to expect that this is a genuine solution to the underlying differential equations without the laborious process of validation through interval arithmetic. 
Unfortunately this is not the case for sum-of-squares optimisation. The underlying semidefinite programs are known to be imprecise and very sensitive to numerical artifacts. Furthermore, sum-of-squares programs are observed to generate particularly poorly-conditioned SDPs, likely due to the fact that very high degree exponents are applied to finite precision floating number. Solvers such as MOSEK can report success and return solutions with small residuals, even when the problem is in fact infeasible, or report failure but return plausible-seeming results. Therefore it is imperative that we can validate our numerics.

Several different procedures for validating sum-of-squares relaxations of this type are found in the literature. The simplest uses the numerical result to write down an analytic Lyapunov function whose validity is confirmed explicitly \citep{goluskin2018bounding}. This is practical only in relatively simple cases, although specially designed computer-algebra packages make this easier \citep{henrion2019spectra}.
A different method is to project the problem onto the space of sums-of-squares of polynomials with rational coefficients \citep{peyrl2008computing}, which can be computed exactly and for which a package is also available \citep{cifuentes2020}.
A third option is to use interval arithmetic to validate an approximate numerical result. For optimisation problems, one can find rigorous, but not sharp, upper- and lower-bounds on a quantity of interest by perturbing the result of an SDP solver \citep{jansson2008rigorous}. This has been applied to find rigorous bounds for dynamical systems \citep{goluskin2018bounding}. However, the method discussed in the present work is equivalent to finding upper- and lower-bounds of exactly zero, so this cannot be employed directly. 
We therefore derive a different, simpler method to validate an approximate result for a sum-of-squares feasibility problem with interval arithmetic, which formulates the problem as a linear system.

\subsection{A semi-definite program as a linear system}
Consider the problem of finding $s\in P_n$ and $Q\in\mathbb{S}_m$ such that \begin{equation}
\label{eq:example}
    s = \mathbf{a}^T Q \mathbf{a},
\end{equation}for some given set of polynomials $a_1,...,a_m\in P_n$. (With no other conditions this example is of course trivial.) Let $\mathbf{q}\in\mathbb{R}^{m(m+1)/2}$ be the vector of unique elements of the symmetric Gram matrix Q, and let $\mathbf{s}\in\mathbb{R}^N$ be the coefficients of $s$ with respect to the monomial basis. \Cref{eq:example} is linear in these coefficients, and so expanding it in monomials and comparing coefficients, we can formulate this as a system
\begin{equation}
    A \left(\begin{matrix}\mathbf{s} \\ \mathbf{q}\end{matrix}\right) = \mathbf{c}
\end{equation}
for some matrix of coefficients $A\in \mathbb{R}^{N \times (N+m(m+1)/2)}$ and vector of constant terms  $\mathbf{c}\in \mathbb{R}^N$. (In this example $\mathbf{c}=0$.)
Suppose that the rank of $A$ is $M$, and that we have some approximate solution $\mathbf{s}_0$, $\mathbf{q}_0$ such that $A \left(\begin{matrix}\mathbf{s_0}\cr \mathbf{q_0}\end{matrix}\right) - \mathbf{c} = \mathbf{\epsilon}$ is small.
Then we can project onto the solution space of the system using
\begin{equation}
    \left(\begin{matrix}\mathbf{s}\cr \mathbf{q} \end{matrix}\right) = A^+ c +(I-A^+ A) \left(\begin{matrix}\mathbf{s_0}\cr \mathbf{q_0} \end{matrix}\right) = \left(\begin{matrix} \mathbf{s_0}\cr \mathbf{q_0} \end{matrix}\right) - A^+ \mathbf{\epsilon}
\end{equation}
where $A^+$ is the Moore-Penrose right inverse, i.e. $A A^+ = I$.
Reassembling this into a polynomial $s$ with coefficients $\mathbf{s}$ and a Gram matrix $Q$ with coefficients $\mathbf{q}$ , we have a solution to \cref{eq:example}. There is no guarantee that the resulting $Q$ is positive semi-definite, but since the matrix $Q_0$ with coefficients $\mathbf{q_0}$ is positive semi-definite, there is reason to hope $Q$ will be too if the initial error $\mathbf{\epsilon}$ is sufficiently small. Our strategy will be to find a vector of intervals which is guaranteed to contain a solution to the linear system, and prove that all $\mathbf{q}$ in these intervals lead to a positive semi-definite Gram matrix, as summarised in the following:
\begin{proposition}
    \label{thm:linsos}
    Let $\mathbf{v}$ be the coefficients of the unknown polynomial functions $V(x,r)$ with respect to the basis of monomials up to degree $d_x$ in $x$ and $d_r$ in $r$. Let $\mathbf{q}_e$, $\mathbf{r}_e$, $\mathbf{q}_o$ and $\mathbf{r}_o$ be the unique elements of the unknown real symmetric matrices $Q_e$, $R_e$, $Q_o$ and $R_o$ respectively, and let $A$ and $\mathbf{c}$ be the (known) matrix and vector to represent the polynomial equation
    \begin{equation}
    \label{eq:fullpoly}
        f_1\partial_1 V + f_2\partial_2 V + f_3\partial_3 V - (x_2-x_1)^2 = M^2(r-r_{min})(r_{max}-r)\left(\mathbf{a_e}^T Q_e \mathbf{a_e} + \mathbf{a_o}^T Q_o \mathbf{a_o}\right) + \left(\mathbf{b_e}^T R_e \mathbf{b_e} + \mathbf{b_o}^T R_o \mathbf{b_o}\right)
    \end{equation}
    as the linear system
    \begin{equation}
    \label{eq:fulllin}
        A \left(\begin{matrix}
        \mathbf{v}\cr\mathbf{q}_e\cr\mathbf{r}_e\cr\mathbf{q}_o\cr\mathbf{r}_o
        \end{matrix}\right) = \mathbf{c}.
    \end{equation}

    If $A$ has full row-rank, and, for some vector $\mathbf{y}_0$, the solution
    \begin{equation}
    \left(\begin{matrix}
        \mathbf{v}\cr\mathbf{q}_e\cr\mathbf{r}_e\cr\mathbf{q}_o\cr\mathbf{r}_o
        \end{matrix}\right) = A^+ \mathbf{c} +(I-A^+ A) \mathbf{y_0}
\end{equation}
leads to symmetric matrices $Q_e$, $R_e$, $Q_o$ and $R_o$ which are positive semi-definite, then the system \labelcref{eq:lorenzresc} is globally stable for $r_{min}\leq r \leq r_{max}$.
\end{proposition}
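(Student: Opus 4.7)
The plan is to reduce Proposition~\ref{thm:linsos} to Proposition~\ref{thm:lorenzrange} in three short steps: verify that the Moore-Penrose formula yields a bona fide solution of the linear system, translate that back into the polynomial identity \labelcref{eq:fullpoly}, and then use the semi-definiteness hypothesis to obtain the sign condition required by Proposition~\ref{thm:lorenzrange}.

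First I would check that the candidate vector satisfies \labelcref{eq:fulllin}. Because $A$ has full row-rank, the formula $A^+ = A^T(AA^T)^{-1}$ defines a genuine right-inverse with $AA^+=I$. A direct computation then gives
\begin{equation*}
A\bigl[A^+\mathbf{c} + (I - A^+A)\mathbf{y}_0\bigr] = AA^+\mathbf{c} + A\mathbf{y}_0 - (AA^+)A\mathbf{y}_0 = \mathbf{c} + A\mathbf{y}_0 - A\mathbf{y}_0 = \mathbf{c},
\end{equation*}
so the proposed $(\mathbf{v},\mathbf{q}_e,\mathbf{r}_e,\mathbf{q}_o,\mathbf{r}_o)$ solves \labelcref{eq:fulllin} exactly.

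Next I would unpack the linear system back into its polynomial meaning. By the way $A$ and $\mathbf{c}$ are constructed, \labelcref{eq:fulllin} is nothing but \labelcref{eq:fullpoly} written as a coefficient comparison in the monomial basis, so any solution automatically yields a polynomial $V$ and symmetric matrices $Q_e,R_e,Q_o,R_o$ for which \labelcref{eq:fullpoly} holds as a polynomial identity. Setting
\begin{equation*}
s(x,r) = \mathbf{a_e}^T Q_e \mathbf{a_e} + \mathbf{a_o}^T Q_o \mathbf{a_o},
\end{equation*}
the positive semi-definiteness of $Q_e$ and $Q_o$ gives $s(x,r)\geq 0$ for all $(x,r)$, and similarly $\mathbf{b_e}^T R_e \mathbf{b_e} + \mathbf{b_o}^T R_o \mathbf{b_o}\geq 0$. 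Rearranging \labelcref{eq:fullpoly} and noting the algebraic identity $M^2(r-r_{min})(r_{max}-r) = (Mr+1-\rho_{min})(\rho_{max}-Mr-1)$ under the identifications $\rho_{min}=Mr_{min}+1$, $\rho_{max}=Mr_{max}+1$, the hypothesis of Proposition~\ref{thm:lorenzrange} is met verbatim, and global stability of \labelcref{eq:lorenzresc} on $r_{min}\leq r\leq r_{max}$ follows.

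The main obstacle is really bookkeeping rather than analysis: one must be confident that the assembly of $A$ and $\mathbf{c}$ from \labelcref{eq:fullpoly} is faithful coefficient-by-coefficient, and that splitting the Gram matrices into symmetry-even and symmetry-odd blocks does not silently discard monomials that would be needed to match both sides of \labelcref{eq:fullpoly}. Once this encoding is trusted and $A$ is known to have full row-rank, the remaining steps are immediate algebra.
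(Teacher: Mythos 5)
Your proof is correct and takes the same route that the paper leaves implicit: verify that the Moore--Penrose formula yields an exact solution of the linear system, translate this back into the polynomial identity \labelcref{eq:fullpoly}, define $s=\mathbf{a_e}^T Q_e\mathbf{a_e}+\mathbf{a_o}^T Q_o\mathbf{a_o}\geq 0$, and invoke Proposition~\ref{thm:lorenzrange} via the identity $M^2(r-r_{min})(r_{max}-r)=(Mr+1-\rho_{min})(\rho_{max}-Mr-1)$. The paper states this proposition without a formal proof block, so your argument supplies exactly the chain of reasoning the author intended.
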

Note that we have entirely eliminated the additional auxiliary function $s$ from this problem, it is used only to find an initial guess $\mathbf{y_0}$ through approximate sum-of-squares optimisation.

One key point here is that the matrix $A$ must be of full row-rank. It is actually the case that by na\"ively assembling $A$ by comparing the coefficients of each of the possible monomials, one arrives at a matrix which has degenerate rows. At a stationary point, where $f(x_1,x_2,x_3,r)=0$, every term in the polynomial expression \labelcref{eq:fullpoly} vanishes. This is by design, by construction of $\mathbf{a}_e$, $\mathbf{b}_e$, $\mathbf{a}_o$ and $\mathbf{b}_o$ as discussed in section \ref{sec:sos}. At a stationary point, either $x_1=x_2=x_3=0$ (which we can ignore by not considering monomials for which none of these appear), or $x_2=x_1$ and $r=x_3=3Mx_1^2/8$. Substituting these expressions into \cref{eq:fullpoly}, all terms vanish. However, subsituting these expressions in to the set of monomials which are used to generate \cref{eq:fulllin}, we arrive at a set of monomials in $x_1$ whose coefficients must vanish.
For example, the monomials $x_1^2$, $x_1x_2$, $x_2^2$, $x_3$ and $r$ all reduce to a monomial $x_1^2$ at a stationary point, and therefore the sum  (up to a constant factor of $3M/8$) of the rows in $A$ which correspond to these monomials must be zero.
This is straightforward to remedy: simply eliminate from \cref{eq:fulllin} one row for each power of $x_1$ in this reduced set of monomials -- since each row must be a linear combination of the others for that power of $x_1$.

\subsection{Interval arithmetic}
The basic idea of numerical interval arithmetic is, rather than using imprecise floating-point numbers directly for computations, to use intervals whose end-points are floating-point numbers but which certainly contain the exact result. For example, adding two intervals results in an interval which is typically slightly larger than the interval given by the approximate sum of the endpoints of the two intervals.
A comprehensive background on interval arithmetic is outside the scope of this work. The textbooks by \citet{jaulin2001applied} and \citet{tucker2011validated} give a good introduction to the field.

We have carefully constructed the problem to be validated in \cref{thm:linsos} so that the matrix $A$ and right-hand side $\mathbf{c}$ can be exactly expressed either as integers or as IEEE double-precision floating point numbers. However, constructing a right-inverse for $A$ with floating-point arithmetic will almost certainly lead to an imprecise result. For that reason, this step is performed with interval arithmetic, leading to an interval matrix which is guaranteed to contain $A^+$. 

Let $\mathbb{IR}$ be the space of closed but not necessarily bounded intervals on the real line, specified by their endpoints. Addition and subtraction are defined in the obvious ways, by addition and subtraction of the endpoints, and multiplication is defined to properly account for signs when intervals are negative or contain zero \citep{tucker2011validated}. These operations are implemented numerically using the Julia package IntervalArithmetic.jl \citep{IntervalArithmetic.jl}, so that the result of any calculation is guaranteed to contain the exact result. We denote by square brackets $[\cdot]$ an interval vector in $\mathbb{IR}^n$ (or interval matrix in $\mathbb{IR}^{n\times m}$) which contains the true value of a quantity in $\mathbb{R}^n$ (resp. $\mathbb{R}^{n\times m}$).

The right-inverse for a real matrix is usually expressed as
\begin{equation*}
    A^+ = A^T (A A^T)^{-1},
\end{equation*}
where $A^T$ is the usual transpose. In fact, it is easier to calculate as
\begin{equation*}
    A^+ = \left(\left(A A^T\right)^{-1} A\right)^T.
\end{equation*}
First we calculate an interval matrix $[A A^T]$ strictly enclosing the product $A A^T$ (both of which are known exactly), and we then use interval Gaussian elimination \citep{jaulin2001applied} with $[A A^T]$ and $A$ to find an interval matrix strictly containing $\left(A^+\right)^T$, which can be exactly transposed to give $[A^+]$. One important detail, as mentioned in the previous section, is that for this to be possible, the rectangular matrix $A$ must have full row-rank, otherwise the Gaussian elimination will give an error or result in infinite intervals and the validation will ultimately fail.

The final step of our validation will be to check that the Gram interval matrices are positive definite.
There are a number of different equivalent conditions for a real symmetric matrix to be positive definite.
The easiest to implement with interval arithmetic is to check that the determinants of all the principal minors, i.e. the upper-left $1\times1$ submatrix, the upper-left $2\times2$ submatrix etc., are positive. Since determinants are calculated with addition, subtraction and multiplication, no new code is required to compute these for interval matrices.
Notice that for a semi-definite matrix which is not strictly definite, the determinant is zero. When the determinant is calculated using interval arithmetic, the result will be an interval containing zero which is not strictly non-negative, and thus our validation will be unsuccessful. This is the reason that we must restrict to strictly positive definite matrices.

The sum-of-squares optimisation was performed using 64-bit `double' precision floating point numbers, to provide an initial guess. However, for validating these results, using 64-bit endpoints to represent the intervals was insufficient and resulted in determinants which were slightly negative. 128-bit floating point numbers were sufficient for almost all our results, but for higher values of $\rho$ it was necessary to use 256 bits. This is straightforward, if rather slow, with Julia.

\subsection{The full validation for the Lorenz system}
Combining all the parts of the previous sections, the algorithm proceeds as follows:
\begin{enumerate}
    \item The sum-of-squares program described in \cref{thm:sos} is formulated and solved approximately, with careful choices of basis for the right-hand side as listed in \cref{tab:bases}, leading to a polynomial Lyapunov function $V$ with coefficients $\mathbf{v}$ and symmetric Gram matrices $Q_e$, $R_e$, $Q_o$ and $R_o$ with coefficients $\mathbf{q}_e$, $\mathbf{r}_e$, $\mathbf{q}_o$ and $\mathbf{r}_o$ respectively.
    \item A large rectangular linear system $A\mathbf{y} = \mathbf{c}$ is constructed as in \cref{thm:linsos}, where $A$ and $\mathbf{c}$ are known exactly, and with approximate solution $\mathbf{y}_0 = \left(\mathbf{v}^T\, \mathbf{q}_e^T\, \mathbf{r}_e^T\, \mathbf{q}_o^T\, \mathbf{r}_o^T\right)^T$ given by the result of the SOS program.
    \item Degenerate rows are removed from this linear system to produce an equivalent system with full row rank $\hat{A}\mathbf{y} = \hat{\mathbf{c}}$.
    \item An interval matrix $[\hat{A}^+]$ enclosing the Moore-Penrose right-inverse $\hat{A}^+$ is calculated. This step also verifies that $\hat{A}$ has full row-rank.
    \item An interval vector $[\mathbf{y}]$ enclosing the true solution of $\hat{A}\mathbf{y} = \hat{\mathbf{c}}$ is computed as $[\mathbf{y}] = [\hat{A}^+]\hat{\mathbf{c}} + (I-[\hat{A}^+]A)\mathbf{y}_0$.
    \item The elements of $[\mathbf{y}]$ corresponding to the Gram matrices of the sum-of-squares program are used to assemble symmetric interval matrices $[Q_e]$, $[R_e]$, $[Q_o]$ and $[R_o]$.
    \item The determinant intervals of the principal minors of these Gram interval matrices are calculated. If they are all strictly positive, the matrices are positive-definite and the validation is successful.
\end{enumerate}

The bottleneck during our computations were the calculations using large interval matrices with very high precision - steps 4 and 5.
The fully validated results are listed in \cref{tab:results}.
At least with the 8th degree polynomials used, it was necessary to subdivide the ranges of $\rho$ into progressively smaller intervals as the values increased.
Combining these results, we have proved:

\begin{theorem}
\label{thm:final}
The Lorenz system \labelcref{eq:lorenz} with $\sigma=10$ and $\beta=8/3$ is globally stable for all $\rho\in[0,12]$.
\end{theorem}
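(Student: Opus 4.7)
The plan is to combine the three ingredients developed in the previous sections: the sufficient condition of \cref{thm:lorenzrange} (which reduces global stability on a range $[\rho_{\min},\rho_{\max}]$ to the existence of a polynomial pair $(V,s)$ satisfying an S-procedure inequality), its sum-of-squares reformulation in \cref{thm:sos}, and the interval-arithmetic validation scheme of \cref{thm:linsos}. To cover the whole interval $[0,12]$, I would choose a finite partition $0=\rho_0<\rho_1<\cdots<\rho_N=12$ and prove global stability separately on each $[\rho_{i-1},\rho_i]$; since global stability of \labelcref{eq:lorenz} is a statement about a fixed parameter value, the union of the validated subintervals immediately gives the full interval and the theorem follows.

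For each subinterval, I would execute the seven-step procedure listed at the end of \cref{sec:validation}. First, run the SDP solver (MOSEK via SumOfSquares.jl) with the symmetry-adapted bases of \cref{tab:bases} and the rescaled system \labelcref{eq:lorenzresc}, obtaining floating-point approximations $\mathbf{v},\mathbf{q}_e,\mathbf{r}_e,\mathbf{q}_o,\mathbf{r}_o$. Second, assemble the rectangular linear system $A\mathbf{y}=\mathbf{c}$ extracted by matching monomial coefficients in the polynomial identity \labelcref{eq:fullpoly}, and prune the degenerate rows coming from the vanishing of every term at the nontrivial equilibria, so that the reduced $\hat A$ has full row-rank. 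Third, compute an interval enclosure $[\hat A^+]$ by interval Gaussian elimination applied to $[\hat A\hat A^{\mathsf T}]$ and $\hat A$, and project the numerical guess to an interval vector $[\mathbf{y}]=[\hat A^+]\hat{\mathbf{c}}+(I-[\hat A^+]\hat A)\mathbf{y}_0$ that is rigorously guaranteed to contain an exact solution. Finally, reassemble $[Q_e],[R_e],[Q_o],[R_o]$ and test positive-definiteness by verifying that the interval determinants of every leading principal minor are strictly positive; if so, \cref{thm:linsos} certifies global stability on that subinterval.

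The principal obstacle is numerical rather than conceptual. As $\rho_{\max}$ approaches the apparent feasibility boundary near $\rho=12$ suggested by \cref{fig:lorenz}, the SDP becomes markedly ill-conditioned, the interval enclosures widen, and the smallest eigenvalues of the Gram matrices drift toward zero. I therefore expect to need a progressively finer partition of $[0,12]$ as $\rho$ grows, together with increased endpoint precision (escalating from 64-bit to 128-bit and, near the top of the range, to 256-bit floating point) to keep the principal-minor determinant intervals strictly positive. The rescaling factor $M$ in \labelcref{eq:lorenzresc} must also be chosen to keep the monomial magnitudes comparable and the SDP numerically well-posed, and the polynomial degree $d_x=8$ together with the symmetry-respecting, equilibrium-vanishing bases of \cref{tab:bases} is essential both for SOS feasibility at $\rho$ near $12$ and for ensuring that $\hat A$ has full row-rank. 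Collecting the list of successfully validated subintervals (the subdivision reported in \cref{tab:results}) and taking their union then yields the stated range $[0,12]$.
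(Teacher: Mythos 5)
Your proposal is correct and follows essentially the same route as the paper: reduce to \cref{thm:lorenzrange} via the SOS relaxation of \cref{thm:sos}, validate each subinterval with the interval-arithmetic projection scheme of \cref{thm:linsos} and the seven-step algorithm, then take the union of the validated subintervals — which is exactly what \cref{tab:results} records. The numerical caveats you flag (finer subdivision and escalating endpoint precision as $\rho\to 12$, choice of $M$, symmetry- and equilibrium-adapted bases to keep $\hat A$ full rank) match the paper's implementation notes precisely.
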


\begin{table}[h]
    \centering
    \begin{tabular}{c|c|c|c}
         $\rho_{min}$ & $\rho_{max}$ & $d_x$ & $d_r$ \\
         \hline
         0 & 2 & 4 & 1 \\
         0 & 4 & 6 & 1 \\
         0 & 6 & 8 & 1 \\
         6 & 10 & 8 & 1 \\
         10 & 11 & 8 & 1 \\
         11 & 11.5 & 8 & 1 \\
         11.5 & 11.75 & 8 & 1 \\
         11.75 & 11.875 & 8 & 1 \\
         11.875 & 12 & 8 & 1
    \end{tabular}
    \caption{Successfully validated globally stable regions for $\rho\in[\rho_{min},\rho_{max}]$, with the corresponding degrees used for the Lyapunov function $V$.}
    \label{tab:results}
\end{table}

\section{Conclusion}
\label{sec:conc}

This work presents three new results. Firstly, a new Lyapunov function method has been described which can be used to prove a system is gradient-like, and in particular the absence of chaos, in continuous time dynamical systems. Secondly, this method has been applied to the Lorenz system up to $\rho=12$, using the inexact computational method of sum-of-squares optimisation, greatly extending the known parameter values for which the system is non-chaotic. Thirdly, the result has been rigorously validated using a new procedure based on interval arithmetic.


It is interesting to note that both the fixed-$\rho$ numerical results and the fully validated results failed to demonstrate gradient-like behaviour above $\rho=12$, even with significantly higher degree polynomials. There are several possible reasons for this. One is that this method would be possible in that range, with more carefully controlled numerics and higher-precision floating point numbers. The second is that the behaviour there is gradient-like, but that it is not possible to use our method to prove it, being only a sufficient condition. The third possibility is that there are some as-yet undiscovered periodic orbits or other non-trivial behaviour which exists above $\rho=12$ but below the known periodic orbits arising in the homoclinic explosion at $\rho\approx13.93$.

The validation procedure described in section \ref{sec:validation} is essentially proving a sharp bound of 0 for the method of \citet{fantuzzi2016bounds}. This could straightforwardly be extended to prove sharp bounds which are rational numbers, if care is taken to formulate the problem so that the bound is exactly specified as a floating-point number. However, it must be stressed that this method took a great deal of trial-and-error to implement, in order to choose the correct basis of polynomials, and so further development in this area is to be encouraged.

One potential extension of this work is to partial differential equations. Previous authors \citep{goluskin2019bounds,fuentes2022global} have applied sum-of-squares optimisation methods to dissipative PDEs using a suitably chosen function space to construct auxiliary functions. Initial experimentation suggests that the present method is equally applicable in such high-dimensional systems, though transforming the numerical results into rigorous ones becomes significantly more complicated.

\section*{Acknowledgements}
The author is indebted to David Goluskin who provided encouragement and advice on this work. Thanks are also due to Spyridon Pougkakiotis for useful comments. Olivier Hénot gave helpful suggestions for the Julia implementation.
The preliminary phase of this project was supported by the European Research Council (ERC) under the European Union's Horizon 2020 research and innovation programme (grant no. 865677), while the author was a member of the Emergent Complexity in Physical Systems laboratory at the École Polytechnique Fédérale de Lausanne.

The full source code to produce the figures in this paper and to perform the validation required for \cref{thm:final} is available at \url{https://github.com/jeremypparker/lorenzstability}.

\appendix
\section{An auxiliary function method for maps}
Consider a discrete-time dynamical system, that is a continuous map $f:\mathbb{R}^n\to\mathbb{R}^n$, and define a trajectory $(x_n)$ by
\begin{equation}
    x_{n+1} = f(x_n).
\end{equation}

\begin{definition}
The \textit{$\omega$-limit set} of a point $x$ with respect to a map $f$ is the subset of state space which is visited at arbitrarily high iterations,
\begin{equation}
    \omega(x,f) = \{y\in\mathbb{R}^n \mid \exists (n_k)\in\mathbb{N}\text{ with }n_{k+1}>n_k\text{ such that }f^{n_k}(x)\to y\text{ as }k\to\infty\}.
\end{equation}

A \textit{limit point} of the system is $y\in\mathbb{R}^n$ such that $y\in\omega(x,f)$ for some $x$.
\end{definition}

\begin{lemma}
Let $g:\mathbb{R}^n\to\mathbb{R}_{\geq 0}$ be some non-negative function defined continuously over state space. 
Suppose there exists a continuous function $V:\mathbb{R}^n\to\mathbb{R}$ such that, for all $x$,
\begin{equation}
    V(f(x)) - V(x) \geq g(x).
\end{equation}
Then $g(y)=0$ for all limit points $y$.
\end{lemma}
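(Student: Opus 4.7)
The plan is to mirror the continuous-time argument of Lemma 1 almost verbatim, replacing the integral over $[t_{n-1},t_n]$ by a telescoping sum over iterates $j = n_{k-1}, n_{k-1}+1, \dots, n_k - 1$. Fix $x_0 \in \mathbb{R}^n$ and $y \in \omega(x_0, f)$, and choose a strictly increasing sequence $(n_k)\subset\mathbb{N}$ with $f^{n_k}(x_0) \to y$.

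The central observation is a discrete telescoping identity combined with the hypothesis:
\begin{equation*}
V\bigl(f^{n_k}(x_0)\bigr) - V\bigl(f^{n_{k-1}}(x_0)\bigr) = \sum_{j=n_{k-1}}^{n_k - 1} \Bigl[V\bigl(f^{j+1}(x_0)\bigr) - V\bigl(f^j(x_0)\bigr)\Bigr] \;\geq\; \sum_{j=n_{k-1}}^{n_k - 1} g\bigl(f^j(x_0)\bigr) \;\geq\; 0.
\end{equation*}
Since $f^{n_k}(x_0) \to y$ and $f^{n_{k-1}}(x_0) \to y$, continuity of $V$ forces the left-hand side to tend to $0$ as $k\to\infty$. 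Because every summand on the right is non-negative, every individual term is bounded by the whole sum; in particular the first term, $g(f^{n_{k-1}}(x_0))$, tends to $0$.

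Finally, continuity of $g$ together with $f^{n_{k-1}}(x_0) \to y$ gives $g(f^{n_{k-1}}(x_0)) \to g(y)$, and combining with the previous step yields $g(y) = 0$.

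I do not anticipate any serious obstacle: the proof is a direct discrete analogue of Lemma 1, with the only subtlety being to note that a non-negative sum tending to zero forces each of its terms to tend to zero, so that a single well-chosen term (here the one at index $n_{k-1}$, whose argument converges to $y$) suffices to conclude by continuity of $g$. No compactness or boundedness of the orbit is needed beyond what is already encoded in $y$ being a limit point.
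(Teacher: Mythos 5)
Your proof is correct and follows essentially the same strategy as the paper's: you use the monotonicity of $V$ along the orbit and the continuity of $V$ and $g$ to squeeze $g(y)$ to zero. The paper short-circuits your telescoping sum by directly invoking $V(f^{n_{k+1}-n_k}(x_{n_k})) \geq V(f(x_{n_k}))$, but the key inequality $V(x_{n_{k+1}}) - V(x_{n_k}) \geq g(x_{n_k}) \geq 0$ is the same, just with a shifted index.
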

\begin{proof}
Let $y$ be a limit point and $(x_n)$ be a trajectory with a sequence of iterations $(n_k)$ such that $x_{n_k}\to y$.
Since $g(x)\geq 0$, we have that $V(f(x)) \geq V(x)$.
Then
\begin{align*}
    V(x_{n_{k+1}}) - V(x_{n_k})&= V\left(f^{n_{k+1}-n_k}(x_{n_k})\right) - V(x_{n_k}) \\
    &\geq V\left(f(x_{n_k})\right) - V(x_{n_k}) \\
    &\geq g(x_{n_k}) \geq 0.
\end{align*}
As $k\to\infty$, $V(x_{n_{k+1}}) - V(x_{n_k})\to 0$, and $g(x_{n_k})\to g(y)$.
So $0\geq g(y) \geq 0$, and thus $g(y)=0$.
\end{proof}

Now taking $g(x)=\left\|f(x)-x\right\|^2$, if such a $V$ exists we could use this result to show that all limit points are fixed points, and no other periodic points can exist. Furthermore, we could, for example, show that periodic points of period 1 (i.e. equilibria) or period $2$ can exist, but no higher, with $g(x) = \left\|f^2(x)-x\right\|^2$.
Any such result would show that the system is non-chaotic. In general let $g_k(x) = \left\|f^k(x)-x\right\|^2$.

A sum-of-squares relaxation is again possible for polynomial $f$.
We would need to find a $V$ such that
\begin{equation}
    V(f(x))-V(x)-g(x) \in \Sigma_n.
\end{equation}
Due to the composition of $V$ with $f$, the polynomial degrees rapidly become very large as the degree of $V$ is increased, which means that approximate solution and rigorous validation is both computationally expensive and requires high precision. Nevertheless, we can demonstrate the method in the simple case of the Hénon map, defined by
\begin{equation}
    f(x_1,x_2) = \left(1-ax_1^2+x_2,bx_1\right).
\end{equation}
We fix $b=0.3$, the standard value, and vary $a$. As $a$ increases from $0$ to the classical value of $1.4$, the initial fixed point undergoes a cascade of period-doubling bifurcations until the onset of chaos around $a\approx1.1$.

\Cref{fig:henon} shows the results of an implementation of this method with different possible degrees $d$ of the polynomial auxiliary function $V$ and different numbers $k$ of iterations  of the map. Even for $k=3$ the problem becomes very difficult and slow, but for $k=1$ we see a clear increase in the residual reported by the semi-definite solver (MOSEK) at the location of the first bifurcation, and for $k=2$ we see a clear increase at the second bifurcation when $d$ is sufficiently large.
The SDP solver reports nonzero residuals, so an exact solution has not been found and these results can only be taken as numerical evidence that it would be possible to find a suitable auxiliary function. Nevertheless, we have every reason to believe that a similar approach to that in section \ref{sec:validation} would be possible here, to rigorously validate these results.

\begin{figure}[h]
    \centering
    \includegraphics[width=0.8\textwidth]{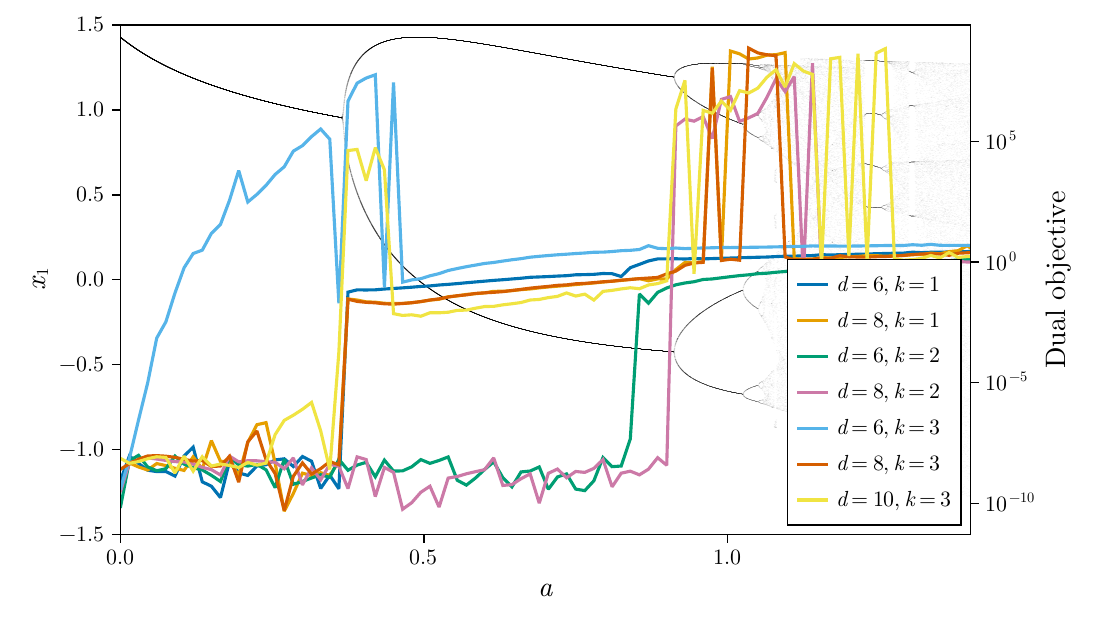}
    \caption{Bifurcation diagram (black, left axis) and SOS results (colours) for the Hénon map at $b=0.3$. For each choice of polynomial degree $d$ and map iterate $k$, the right vertical axis shows the dual objective value reported by MOSEK. A value close to zero indicates that the problem is likely to be feasible.}
    \label{fig:henon}
\end{figure}

\bibliography{bib}

\end{document}